\documentclass[journal]{IEEEtran}

\usepackage{amsmath,amssymb,amsthm}
\usepackage{graphicx}
\usepackage{booktabs}
\usepackage{hyperref}
\usepackage{algorithm}
\usepackage{algpseudocode}
\usepackage{multirow}
\usepackage[numbers,sort&compress]{natbib}
\usepackage{xcolor}
\usepackage{wasysym}
\usepackage{tikz}
\usepackage{enumitem}
\usepackage{tabularx}
\usepackage{adjustbox}
\usepackage{balance}

\usetikzlibrary{shapes.geometric,arrows.meta,positioning,fit,calc}

\newtheorem{definition}{Definition}
\newtheorem{theorem}{Theorem}
\newtheorem{property}{Property}

\hypersetup{
  colorlinks=true,
  linkcolor=blue!60!black,
  citecolor=blue!60!black,
  urlcolor=blue!60!black
}

\newcommand{\FC}{\CIRCLE}
\newcommand{\PC}{\LEFTcircle}
\newcommand{\NC}{$\cdot$}

\begin{document}

\title{A Formal Security Framework for MCP-Based AI Agents: Threat Taxonomy, Verification Models, and Defense Mechanisms}

\author{
  \IEEEauthorblockN{Nirajan Acharya\textsuperscript{1}, Gaurav Kumar Gupta\textsuperscript{2}}
  \IEEEauthorblockA{\textsuperscript{1}\textit{University of the Cumberlands} \\
    Email: \texttt{nacharya39400@ucumberlands.edu}}
  \IEEEauthorblockA{\textsuperscript{2}\textit{Youngstown State University} \\
    Email: \texttt{guptagauravk1@gmail.com}}
}

\maketitle

\begin{abstract}
The Model Context Protocol (MCP), introduced by Anthropic in November 2024 and now governed by the Linux Foundation's Agentic AI Foundation, has rapidly become the de facto standard for connecting large language model (LLM)-based agents to external tools and data sources, with over 97 million monthly SDK downloads and more than 177,000 registered tools. However, this explosive adoption has exposed a critical gap: the absence of a unified, formal security framework capable of systematically characterizing, analyzing, and mitigating the diverse threats facing MCP-based agent ecosystems. Existing security research remains fragmented across individual attack papers, isolated benchmarks, and point defense mechanisms. This paper presents \textsc{MCPShield}, a comprehensive formal security framework for MCP-based AI agents. We make four principal contributions: (1)~a hierarchical threat taxonomy comprising 7 threat categories and 23 distinct attack vectors organized across four attack surfaces, grounded in the analysis of 177,000+ MCP tools; (2)~a formal verification model based on labeled transition systems with trust-boundary annotations that enables static and runtime analysis of MCP tool interaction chains; (3)~a systematic comparative evaluation of 12 existing defense mechanisms, identifying coverage gaps across our threat taxonomy; and (4)~a defense-in-depth reference architecture integrating capability-based access control, cryptographic tool attestation, information flow tracking, and runtime policy enforcement. Our analysis reveals that no existing single defense covers more than 34\% of the identified threat landscape, whereas \textsc{MCPShield}'s integrated architecture achieves theoretical coverage of 91\%. We further identify seven open research challenges that must be addressed to secure the next generation of agentic AI systems.
\end{abstract}

\begin{IEEEkeywords}
Model Context Protocol, AI agent security, large language models, formal verification, threat modeling, tool poisoning, agentic AI
\end{IEEEkeywords}

\section{Introduction}
\label{sec:introduction}

\IEEEPARstart{L}{arge} language model (LLM)-based AI agents represent a paradigm shift in software systems, evolving from stateless question-answering interfaces into autonomous entities capable of reasoning, planning, and executing multi-step workflows that modify real-world environments~\cite{Yao2023ReAct, Schick2023Toolformer}. Unlike traditional LLM applications, modern AI agents interact with external tools, databases, APIs, and other agents through structured protocols, creating complex trust relationships that extend far beyond the model's training boundary~\cite{AbouAli2025Agentic, Deng2025Threat}.

Central to this transformation is the Model Context Protocol (MCP), an open standard introduced by Anthropic in November 2024~\cite{Anthropic2024MCP}. Built on JSON-RPC 2.0, MCP defines a bidirectional communication interface between LLM host applications (clients) and external capability providers (servers), standardizing how agents discover, invoke, and compose tools~\cite{MCPSpec2025}. The protocol's adoption has been extraordinary: as of early 2026, the MCP ecosystem encompasses over 10,000 active servers, 177,000 registered tools, and 97 million monthly SDK downloads~\cite{Stein2026Evidence}. Major technology companies---including OpenAI, Google, Microsoft, and Amazon Web Services---have endorsed MCP through the Linux Foundation's Agentic AI Foundation (AAIF), establishing it as vendor-neutral infrastructure for the agentic AI era~\cite{AAIF2025}.

However, this rapid adoption has dramatically expanded the attack surface of AI agent systems. An empirical analysis of the MCP ecosystem reveals that ``action'' tools---those capable of directly modifying external environments---grew from 27\% to 65\% of all tools between November 2024 and February 2026~\cite{Stein2026Evidence}. Each tool invocation creates a potential trust boundary violation, and the compositional nature of MCP interactions means that compromising a single server can cascade through multi-agent workflows. Recent work has demonstrated concrete attacks including tool poisoning (embedding malicious instructions in tool descriptions)~\cite{InvariantLabs2025Poisoning, Wang2025MCPTox}, rug pulls (mutating tool behavior post-approval)~\cite{Bhatt2025ETDI}, cross-server data exfiltration~\cite{LogToLeak2025}, and privilege escalation through tool composition~\cite{Yang2025MCPSecBench}.

The security research community has responded with a growing but \textit{fragmented} body of work. Benchmarks such as MCPTox~\cite{Wang2025MCPTox}, MCP-SafetyBench~\cite{Zong2025SafetyBench}, and MCPSecBench~\cite{Yang2025MCPSecBench} evaluate specific attack categories but use incompatible threat taxonomies. Defense proposals including ETDI~\cite{Bhatt2025ETDI}, MCP-Guard~\cite{Xing2025MCPGuardDefense}, and MCPGuard~\cite{Wang2025MCPGuard} address individual attack vectors but lack integration. Enterprise security frameworks~\cite{Narajala2025Enterprise} provide high-level guidance without formal underpinnings. Most critically, no existing work provides a \textit{unified formal model} that captures the security properties of MCP interactions in a verifiable manner.

This fragmentation poses three concrete problems. First, practitioners cannot systematically assess the security posture of an MCP deployment because no common threat vocabulary exists. Second, defense mechanisms cannot be meaningfully compared because they operate under different assumptions and threat models. Third, the absence of formal semantics for MCP security properties precludes automated verification and certified deployments.

This paper addresses these gaps through \textsc{MCPShield}, a comprehensive formal security framework for MCP-based AI agents. Our contributions are:

\begin{itemize}[leftmargin=*]
  \item \textbf{Unified Threat Taxonomy (Section~\ref{sec:taxonomy}).} We synthesize findings from 12 MCP security papers, 5 benchmarks, and the OWASP Top 10 for LLM Applications into a hierarchical taxonomy of 7 threat categories and 23 attack vectors organized across 4 attack surfaces, providing the first common vocabulary for MCP security.

  \item \textbf{Formal Verification Model (Section~\ref{sec:formal}).} We introduce a labeled transition system with trust-boundary annotations ($\mathcal{M}_{\text{MCP}}$) that formalizes MCP interactions as state transitions. We define four fundamental security properties---\textit{tool integrity}, \textit{data confinement}, \textit{privilege boundedness}, and \textit{context isolation}---and provide decidability results for their verification.

  \item \textbf{Comparative Defense Analysis (Section~\ref{sec:defense}).} We systematically evaluate 12 existing defense mechanisms against our taxonomy, mapping coverage gaps and identifying architectural blind spots. We show that no individual defense covers more than 34\% of the threat landscape.

  \item \textbf{Defense-in-Depth Reference Architecture (Section~\ref{sec:architecture}).} We propose \textsc{MCPShield}, an integrated security architecture combining four complementary layers: capability-based access control, cryptographic tool attestation, information flow tracking, and runtime policy enforcement.
\end{itemize}

The remainder of this paper is organized as follows. Section~\ref{sec:background} provides background on MCP architecture and related work. Section~\ref{sec:taxonomy} presents our threat taxonomy. Section~\ref{sec:formal} introduces the formal verification model. Section~\ref{sec:defense} analyzes existing defense mechanisms. Section~\ref{sec:architecture} describes the \textsc{MCPShield} reference architecture. Section~\ref{sec:discussion} discusses implications and limitations. Section~\ref{sec:future} identifies open research challenges. Section~\ref{sec:conclusion} concludes.

\section{Background and Related Work}
\label{sec:background}

\subsection{The Model Context Protocol}
\label{subsec:mcp_background}

MCP defines a client-server architecture where \textit{hosts} (LLM applications) spawn \textit{clients} that maintain stateful sessions with \textit{servers} (external capability providers)~\cite{MCPSpec2025}. The protocol operates over JSON-RPC 2.0 and supports three primary server capabilities:

\begin{itemize}[leftmargin=*]
  \item \textbf{Tools}: Executable functions that agents can invoke (e.g., file operations, API calls, database queries). Each tool is defined by a name, description, and JSON Schema input specification.
  \item \textbf{Resources}: Read-only data sources exposed via URI-based addressing (e.g., file contents, database records).
  \item \textbf{Prompts}: Reusable prompt templates that guide agent behavior for specific tasks.
\end{itemize}

The protocol lifecycle consists of four phases: \textit{initialization} (capability negotiation via \texttt{initialize} handshake), \textit{discovery} (tool/resource enumeration via \texttt{tools/list}), \textit{operation} (tool invocation via \texttt{tools/call}), and \textit{termination} (session closure)~\cite{Hou2026Landscape}. Servers may additionally request \textit{sampling} from the host LLM and \textit{elicitation} of user input, creating bidirectional information flows.

\subsection{MCP Ecosystem Scale}
\label{subsec:ecosystem}

\citet{Stein2026Evidence} conducted the first large-scale empirical study of the MCP ecosystem, monitoring 177,436 tools across public MCP server repositories between November 2024 and February 2026. Key findings relevant to security include:

\begin{itemize}[leftmargin=*]
  \item Software development accounts for 67\% of all agent tools and 90\% of MCP server downloads, concentrating risk in code-execution-adjacent operations.
  \item The proportion of ``action'' tools (those that modify external environments) rose from 27\% to 65\%, indicating a rapid shift from read-only to write-capable agent interactions.
  \item The long-tail distribution of server popularity creates a dual risk: heavily-used servers are high-value targets, while niche servers receive minimal security scrutiny.
\end{itemize}

\subsection{Agent Communication Protocols Landscape}
\label{subsec:protocols}

MCP is not the only agent communication protocol. \citet{Ehtesham2025Interoperability} survey four emerging protocols: MCP (tool invocation via JSON-RPC), Google's Agent-to-Agent Protocol (A2A, peer-to-peer task delegation)~\cite{Google2025A2A}, the Agent Communication Protocol (ACP, RESTful multimodal messaging), and the Agent Network Protocol (ANP, decentralized agent discovery via W3C DIDs). Each protocol occupies a distinct niche---MCP for tool access, A2A for inter-agent collaboration---but their interoperability introduces additional security surfaces. An agent bridging MCP and A2A, for instance, must reconcile two distinct trust models, creating potential semantic gaps that adversaries can exploit.

\subsection{Related Security Work}
\label{subsec:related_security}

We organize related work into four categories.

\textbf{MCP-Specific Security Research.} \citet{Hou2026Landscape} provided the first systematic study of MCP security, identifying 16 threat scenarios across 4 attacker types (malicious developers, external attackers, malicious users, and inherent security flaws) mapped to the MCP server lifecycle. \citet{Narajala2025Enterprise} proposed enterprise mitigation strategies. \citet{Jamshidi2025Securing} analyzed tool poisoning, shadowing, and rug pull attacks, proposing RSA-based manifest signing and LLM-on-LLM semantic vetting. \citet{Jamshidi2026Manifest} extended this with a secure tool manifest and digital signing framework.

\textbf{MCP Benchmarks.} MCPTox~\cite{Wang2025MCPTox} benchmarks tool poisoning attacks across 45 real-world MCP servers with 353 tools and 1,312 malicious test cases spanning 11 risk categories. MCP-SafetyBench~\cite{Zong2025SafetyBench} provides a unified taxonomy of 20 attack types evaluated across 5 domains. MCPSecBench~\cite{Yang2025MCPSecBench} formalizes a secure MCP specification and identifies 17 attack types across 4 attack surfaces. Despite overlapping scope, these benchmarks use incompatible taxonomies, making cross-comparison difficult.

\textbf{MCP Defense Mechanisms.} MCP-Guard~\cite{Xing2025MCPGuardDefense} proposes a three-stage defense (static scanning, neural detection, LLM arbitration) achieving 89.63\% accuracy. MCPGuard~\cite{Wang2025MCPGuard} focuses on automated vulnerability detection for MCP servers. ETDI~\cite{Bhatt2025ETDI} introduces OAuth-enhanced tool definitions with cryptographic identity verification. MCPS~\cite{MCPS2026Secure} implements ECDSA P-256 message signing with progressive trust levels.

\textbf{General AI Agent Security.} \citet{Deng2025Threat} identify four knowledge gaps in AI agent security: unpredictability of multi-step inputs, complexity in internal executions, variability of operational environments, and interactions with untrusted external entities. \citet{He2025Emerged} report that none of 16 tested LLM agents achieved an overall safety score above 60\%. \citet{Tang2026Security} provide a comprehensive taxonomy of attacks and defenses for LLM-based agents. \citet{Dehghantanha2026SoK} systematize the attack surface of agentic AI across tools, RAG, and multi-agent loops. The OWASP Top 10 for LLM Applications~\cite{OWASP2025LLM} identifies Excessive Agency (LLM06) and Supply Chain (LLM03) as critical risks directly relevant to MCP.

\textbf{Differentiation.} Unlike prior work, this paper provides: (a)~a unified threat taxonomy reconciling all existing MCP-specific taxonomies, (b)~the first formal model with verifiable security properties for MCP interactions, and (c)~a defense-in-depth architecture informed by systematic coverage analysis of existing mechanisms.

\section{Threat Taxonomy for MCP-Based Agents}
\label{sec:taxonomy}

We construct our threat taxonomy through a systematic process. First, we collected all attack vectors documented in 12 MCP security papers~\cite{Hou2026Landscape, Wang2025MCPTox, Bhatt2025ETDI, Wang2025MCPGuard, Xing2025MCPGuardDefense, Narajala2025Enterprise, Zong2025SafetyBench, Yang2025MCPSecBench, Jamshidi2025Securing, Jamshidi2026Manifest, InvariantLabs2025Poisoning, LogToLeak2025} and 4 general agent security surveys~\cite{Deng2025Threat, He2025Emerged, Tang2026Security, Dehghantanha2026SoK}. We then performed open coding to identify common themes, merged overlapping categories, and organized the result into a hierarchical structure. The taxonomy was validated against the OWASP Top 10 for LLM Applications~\cite{OWASP2025LLM}, the STRIDE framework~\cite{Shostack2014Threat}, and the 177,000-tool empirical dataset~\cite{Stein2026Evidence}.

\subsection{Attack Surface Model}
\label{subsec:attack_surface}

We identify four attack surfaces in MCP-based agent systems:

\begin{definition}[MCP Attack Surfaces]
An MCP agent system exposes four attack surfaces:
\begin{enumerate}[leftmargin=*]
  \item \textbf{$\mathcal{S}_{\text{tool}}$: Tool Interface Surface.} The boundary between the LLM agent and MCP tool definitions, including tool descriptions, parameter schemas, and return values.
  \item \textbf{$\mathcal{S}_{\text{transport}}$: Transport Surface.} The communication channel between MCP clients and servers, including JSON-RPC messages, session state, and transport-layer security.
  \item \textbf{$\mathcal{S}_{\text{server}}$: Server Surface.} The MCP server implementation itself, including its authentication mechanisms, resource access patterns, and supply chain dependencies.
  \item \textbf{$\mathcal{S}_{\text{compose}}$: Composition Surface.} The emergent surface created when multiple MCP servers, tools, and agents interact within a single workflow or across protocol boundaries (MCP/A2A).
\end{enumerate}
\end{definition}

\subsection{Threat Categories}
\label{subsec:threat_categories}

We organize 23 attack vectors into 7 threat categories mapped to these surfaces (Table~\ref{tab:taxonomy}).

\begin{table*}[t]
\centering
\caption{Unified Threat Taxonomy for MCP-Based AI Agents. Attack vectors are mapped to attack surfaces ($\mathcal{S}$), attacker type, and existing benchmark/paper coverage (\checkmark = covered, $\circ$ = partially covered, --- = not covered).}
\label{tab:taxonomy}
\small
\begin{tabular}{@{}p{2.8cm}p{3.8cm}ccccc@{}}
\toprule
\textbf{Threat Category} & \textbf{Attack Vector} & \textbf{Surface} & \textbf{Attacker} & \textbf{MCPTox} & \textbf{SafetyBench} & \textbf{SecBench} \\
\midrule
\multirow{4}{*}{\parbox{2.8cm}{TC1: Tool\\Poisoning}}
  & TV1: Description injection & $\mathcal{S}_{\text{tool}}$ & Developer & \checkmark & \checkmark & \checkmark \\
  & TV2: Schema manipulation & $\mathcal{S}_{\text{tool}}$ & Developer & $\circ$ & --- & \checkmark \\
  & TV3: Return value poisoning & $\mathcal{S}_{\text{tool}}$ & External & \checkmark & $\circ$ & $\circ$ \\
  & TV4: Tool shadowing & $\mathcal{S}_{\text{tool}}$ & Developer & --- & --- & \checkmark \\
\midrule
\multirow{3}{*}{\parbox{2.8cm}{TC2: Rug Pull \&\\Mutation}}
  & TV5: Post-approval mutation & $\mathcal{S}_{\text{server}}$ & Developer & --- & --- & $\circ$ \\
  & TV6: Version rollback & $\mathcal{S}_{\text{server}}$ & Developer & --- & --- & --- \\
  & TV7: Capability escalation & $\mathcal{S}_{\text{server}}$ & Developer & --- & --- & --- \\
\midrule
\multirow{4}{*}{\parbox{2.8cm}{TC3: Cross-Server\\Data Leakage}}
  & TV8: Exfiltration via logging & $\mathcal{S}_{\text{compose}}$ & External & --- & \checkmark & $\circ$ \\
  & TV9: Context bleed & $\mathcal{S}_{\text{compose}}$ & External & --- & $\circ$ & --- \\
  & TV10: Channel coercion & $\mathcal{S}_{\text{compose}}$ & External & --- & --- & --- \\
  & TV11: Sampling abuse & $\mathcal{S}_{\text{transport}}$ & Server & --- & --- & --- \\
\midrule
\multirow{3}{*}{\parbox{2.8cm}{TC4: Privilege\\Escalation}}
  & TV12: Capability chaining & $\mathcal{S}_{\text{compose}}$ & External & --- & $\circ$ & $\circ$ \\
  & TV13: Consent bypass & $\mathcal{S}_{\text{tool}}$ & External & $\circ$ & \checkmark & \checkmark \\
  & TV14: Role confusion & $\mathcal{S}_{\text{transport}}$ & External & --- & --- & --- \\
\midrule
\multirow{3}{*}{\parbox{2.8cm}{TC5: Server Trust\\Violations}}
  & TV15: Impersonation & $\mathcal{S}_{\text{server}}$ & External & --- & --- & \checkmark \\
  & TV16: Supply chain compromise & $\mathcal{S}_{\text{server}}$ & Developer & --- & --- & --- \\
  & TV17: Dependency hijacking & $\mathcal{S}_{\text{server}}$ & External & --- & --- & --- \\
\midrule
\multirow{3}{*}{\parbox{2.8cm}{TC6: Context\\Manipulation}}
  & TV18: Prompt injection via tool & $\mathcal{S}_{\text{tool}}$ & External & \checkmark & \checkmark & \checkmark \\
  & TV19: Memory poisoning & $\mathcal{S}_{\text{compose}}$ & External & --- & $\circ$ & --- \\
  & TV20: Resource injection & $\mathcal{S}_{\text{tool}}$ & External & $\circ$ & $\circ$ & $\circ$ \\
\midrule
\multirow{3}{*}{\parbox{2.8cm}{TC7: Protocol-Level\\Vulnerabilities}}
  & TV21: Session hijacking & $\mathcal{S}_{\text{transport}}$ & External & --- & --- & --- \\
  & TV22: Replay attacks & $\mathcal{S}_{\text{transport}}$ & External & --- & --- & --- \\
  & TV23: Cross-protocol confusion & $\mathcal{S}_{\text{compose}}$ & External & --- & --- & --- \\
\bottomrule
\end{tabular}
\end{table*}

\subsubsection{TC1: Tool Poisoning}
Tool poisoning exploits the fundamental design of MCP, where tool descriptions are consumed by LLMs as natural language and thus can contain adversarial instructions~\cite{InvariantLabs2025Poisoning}. \citet{Wang2025MCPTox} demonstrated that even highly capable models (o1-mini: 72.8\% attack success rate) are vulnerable because the attack exploits their superior instruction-following abilities. We identify four variants:

\textit{TV1: Description Injection.} Malicious instructions embedded in tool description fields are invisible to end users but parsed by the LLM, causing it to execute unintended actions. Invariant Labs demonstrated that a ``Fact of the Day'' tool could exfiltrate an entire WhatsApp chat history through this vector~\cite{InvariantLabs2025Poisoning}.

\textit{TV2: Schema Manipulation.} Tool parameter schemas are modified to accept additional hidden fields that trigger side effects not visible in the tool's advertised interface.

\textit{TV3: Return Value Poisoning.} Tool execution returns contain embedded adversarial instructions that hijack the agent's subsequent reasoning, chaining into further malicious tool invocations.

\textit{TV4: Tool Shadowing.} A malicious server registers a tool with the same name or similar description as a legitimate tool, intercepting invocations intended for the legitimate version~\cite{Jamshidi2025Securing}.

\subsubsection{TC2: Rug Pull and Mutation Attacks}
Rug pull attacks exploit MCP's dynamic nature, where tool definitions can change after initial user approval~\cite{Bhatt2025ETDI}.

\textit{TV5: Post-Approval Mutation.} A tool that initially behaves benignly modifies its description, parameters, or behavior after the user grants permission, effectively converting approved access into unauthorized capabilities.

\textit{TV6: Version Rollback.} An MCP server rolls back to a previous tool version with known vulnerabilities or different behavior, circumventing security checks performed against the newer version.

\textit{TV7: Capability Escalation.} A server gradually expands its tool's capabilities across sessions, each increment appearing minor but cumulatively acquiring powerful unauthorized access.

\subsubsection{TC3: Cross-Server Data Leakage}
When agents interact with multiple MCP servers in a single session, data from one server's context can leak to another~\cite{LogToLeak2025}.

\textit{TV8: Exfiltration via Logging.} A malicious tool coerces the agent into invoking a logging or analytics tool to transmit sensitive data to an attacker-controlled endpoint. The Log-to-Leak attack~\cite{LogToLeak2025} systematizes this as four injection components: Trigger, Tool Binding, Justification, and Pressure.

\textit{TV9: Context Bleed.} Information from one MCP server's responses persists in the LLM's context window and is inadvertently included in requests to another server, violating data isolation expectations.

\textit{TV10: Channel Coercion.} A malicious tool manipulates the agent into using an insecure communication channel (e.g., HTTP instead of HTTPS, unencrypted stdio) for data transfer.

\textit{TV11: Sampling Abuse.} MCP servers can request LLM sampling from the host. A malicious server exploits this bidirectional channel to extract information from the host's context or influence the model's behavior for other servers.

\subsubsection{TC4: Privilege Escalation}
Agents operating under the principle of least privilege may still be manipulated into exceeding their authorized capabilities~\cite{OWASP2025LLM}.

\textit{TV12: Capability Chaining.} An attacker composes individually benign tool invocations to achieve an unauthorized outcome. For example, a read-file tool combined with a send-email tool enables data exfiltration, though neither tool alone is malicious.

\textit{TV13: Consent Bypass.} Adversarial tool descriptions manipulate the LLM into auto-approving actions that should require explicit user consent, exploiting the agent's tendency to follow instructions embedded in tool outputs.

\textit{TV14: Role Confusion.} In multi-role agent architectures, a tool invocation blurs the boundary between user-level and system-level operations, causing the agent to execute privileged operations under a standard user context.

\subsubsection{TC5: Server Trust Violations}
The MCP ecosystem's open nature---anyone can publish a server---creates server-level trust challenges analogous to software supply chain attacks~\cite{Ladisa2023Supply}.

\textit{TV15: Impersonation.} An attacker deploys an MCP server that mimics the identity (name, description, tool signatures) of a trusted server to intercept agent interactions.

\textit{TV16: Supply Chain Compromise.} A legitimate MCP server's dependencies are compromised, introducing malicious behavior without modifying the server's own code.

\textit{TV17: Dependency Hijacking.} A server's package name or repository is claimed by an attacker after the original maintainer abandons it, similar to npm/PyPI namespace hijacking.

\subsubsection{TC6: Context Manipulation}
These attacks target the LLM's reasoning process through its context window~\cite{Gulyamov2026Prompt}.

\textit{TV18: Prompt Injection via Tool.} The canonical MCP attack: malicious content in tool descriptions or return values that hijacks the agent's behavior. MCPTox~\cite{Wang2025MCPTox} categorizes 11 risk categories spanning information theft, harmful content generation, and system manipulation.

\textit{TV19: Memory Poisoning.} In agents with persistent memory (e.g., conversation history, knowledge bases), adversarial tool outputs inject false information that corrupts future reasoning sessions.

\textit{TV20: Resource Injection.} MCP resources (read-only data sources) are poisoned with adversarial content that influences the agent's tool selection and invocation patterns.

\subsubsection{TC7: Protocol-Level Vulnerabilities}
These attacks exploit weaknesses in MCP's transport and session management~\cite{Ferrag2025Protocol}.

\textit{TV21: Session Hijacking.} An attacker captures or predicts MCP session identifiers to inject commands into an active agent-server session.

\textit{TV22: Replay Attacks.} Previously captured JSON-RPC messages are re-transmitted to trigger repeated tool invocations or re-establish expired sessions.

\textit{TV23: Cross-Protocol Confusion.} When an agent bridges MCP and another protocol (e.g., A2A), semantic differences in trust models, authentication mechanisms, or capability representations create exploitable gaps~\cite{Ehtesham2025Interoperability}.

\subsection{Taxonomy Coverage Analysis}
\label{subsec:coverage_analysis}

Table~\ref{tab:taxonomy} reveals significant gaps in existing benchmark coverage. Of 23 attack vectors, MCPTox covers 5 fully and 2 partially, MCP-SafetyBench covers 4 fully and 5 partially, and MCPSecBench covers 5 fully and 4 partially. No benchmark addresses rug pull mutations (TV5--TV7), protocol-level attacks (TV21--TV23), or supply chain threats (TV16--TV17). This confirms our thesis that existing work is fragmented and a unified framework is necessary.

\section{Formal Verification Model}
\label{sec:formal}

We formalize MCP interactions as a labeled transition system with trust-boundary annotations. The goal is to define security properties precisely enough for automated verification.

\subsection{System Model}
\label{subsec:system_model}

\begin{definition}[MCP Agent System]
An MCP agent system is a tuple
\[\mathcal{M} = (\mathcal{A}, \mathcal{S}, \mathcal{T}, \mathcal{R}, \mathcal{C}, \ell)\]
where:
\begin{itemize}[leftmargin=*]
  \item $\mathcal{A} = \{a_1, \ldots, a_n\}$ is a finite set of agents (LLM instances);
  \item $\mathcal{S} = \{s_1, \ldots, s_m\}$ is a finite set of MCP servers;
  \item $\mathcal{T} = \{t_1, \ldots, t_k\}$ is a finite set of tools, where each $t_i$ is hosted by some $s_j \in \mathcal{S}$;
  \item $\mathcal{R} = \{r_1, \ldots, r_p\}$ is a finite set of resources;
  \item $\mathcal{C} \subseteq \mathcal{A} \times \mathcal{S}$ is the set of active client connections;
  \item $\ell\colon \mathcal{T} \cup \mathcal{R} \to \mathcal{L}$ is a security labeling function mapping tools and resources to a security lattice $\mathcal{L}$.
\end{itemize}
\end{definition}

The security lattice $\mathcal{L}$ follows \citet{Denning1976Lattice}, partially ordered by the \textit{can-flow-to} relation $\sqsubseteq$, with $\bot$ (public) and $\top$ (restricted) as bounds.

\begin{definition}[Trust Domain]
A trust domain $\mathcal{D}_i = (\mathcal{S}_i, \mathcal{T}_i, \pi_i)$ groups a set of servers $\mathcal{S}_i \subseteq \mathcal{S}$ and their tools $\mathcal{T}_i \subseteq \mathcal{T}$ under a common trust policy~$\pi_i$. Communication within a trust domain is considered trusted; cross-domain communication requires explicit verification.
\end{definition}

\subsection{Labeled Transition System}
\label{subsec:lts}

\begin{definition}[MCP Transition System]
The MCP transition system is
\[\mathcal{LTS} = (\Sigma, \Sigma_0, \Lambda, \rightarrow)\]
where:
\begin{itemize}[leftmargin=*]
  \item $\Sigma$ is the set of system states, where each state $\sigma = (\mathcal{C}_\sigma, \mathcal{K}_\sigma, \mathcal{P}_\sigma)$ consists of active connections~$\mathcal{C}_\sigma$, knowledge state~$\mathcal{K}_\sigma$ (information available to each agent), and permission state~$\mathcal{P}_\sigma$ (granted capabilities);
  \item $\Sigma_0 \subseteq \Sigma$ is the set of initial states;
  \item $\Lambda$ is the set of action labels, partitioned into:
  \begin{itemize}[leftmargin=1em]
    \item $\Lambda_{\text{disc}}$: discovery (\texttt{tools/list}, \texttt{resources/list})
    \item $\Lambda_{\text{inv}}$: invocation (\texttt{tools/call})
    \item $\Lambda_{\text{read}}$: resource access (\texttt{resources/read})
    \item $\Lambda_{\text{samp}}$: sampling (\texttt{sampling/createMessage})
    \item $\Lambda_{\text{admin}}$: session mgmt.\ (\texttt{initialize}, \texttt{shutdown})
  \end{itemize}
  \item ${\rightarrow} \subseteq \Sigma \times \Lambda \times \Sigma$ is the transition relation.
\end{itemize}
\end{definition}

Each transition $\sigma \xrightarrow{\lambda} \sigma'$ represents a single MCP operation. A \textit{trace} $\tau = \sigma_0 \xrightarrow{\lambda_1} \sigma_1 \xrightarrow{\lambda_2} \cdots \xrightarrow{\lambda_n} \sigma_n$ represents a complete interaction sequence. We write $\text{Traces}(\mathcal{LTS})$ for the set of all valid traces.

\subsection{Security Properties}
\label{subsec:security_properties}

We define four fundamental security properties for MCP agent systems.

\begin{property}[Tool Integrity]
\label{prop:integrity}
A system satisfies tool integrity if, for every trace $\tau \in \text{Traces}(\mathcal{LTS})$ and every tool invocation $\sigma_i \xrightarrow{\texttt{call}(t)} \sigma_{i+1}$, the tool definition $\text{def}(t)$ at time $i$ is identical to the definition at the time of user approval:
\begin{multline}
\forall \tau, \forall i: \sigma_i \xrightarrow{\texttt{call}(t)} \sigma_{i+1} \\
\implies \text{def}_i(t) = \text{def}_{\text{approved}}(t)
\end{multline}
This property defends against rug pull attacks (TC2) and tool shadowing (TV4).
\end{property}

\begin{property}[Data Confinement]
\label{prop:confinement}
A system satisfies data confinement if information never flows from a higher security level to a lower one across trust domain boundaries. For any trace $\tau$ and any cross-domain data flow $(d, \mathcal{D}_i, \mathcal{D}_j)$ where $i \neq j$:
\begin{multline}
\ell(d) \sqsubseteq \ell(\mathcal{D}_j) \quad \text{or} \\
\text{declassify}(d, \mathcal{D}_j) \text{ is explicitly authorized}
\end{multline}
This property defends against cross-server data leakage (TC3).
\end{property}

\begin{property}[Privilege Boundedness]
\label{prop:privilege}
A system satisfies privilege boundedness if, for every trace $\tau$ and every state $\sigma_i$, the effective permissions of each agent $a$ do not exceed the intersection of the agent's granted capabilities and the invoked tools' declared permissions:
\begin{multline}
\forall \tau, \forall i, \forall a \in \mathcal{A}: \\
\text{eff}(\sigma_i, a) \subseteq \text{cap}(a) \cap \bigcup_{t \in \text{active}(\sigma_i, a)} \text{perm}(t)
\end{multline}
This property defends against privilege escalation (TC4), including capability chaining.
\end{property}

\begin{property}[Context Isolation]
\label{prop:isolation}
A system satisfies context isolation if the knowledge state of an agent with respect to one server cannot influence its behavior toward another server without explicit cross-domain authorization:
\begin{multline}
\forall \tau, \forall s_i, s_j \in \mathcal{S},\, i \neq j: \\
\mathcal{K}_\sigma^{s_i}(a) \cap \text{input}\!\left(\sigma \xrightarrow{\texttt{call}(t_{s_j})}\right) = \emptyset
\end{multline}
unless a cross-domain flow policy explicitly permits the transfer. This property defends against context bleed (TV9) and memory poisoning (TV19).
\end{property}

\subsection{Verification Approach}
\label{subsec:verification}

\begin{theorem}[Decidability of Tool Integrity]
Tool integrity (Property~\ref{prop:integrity}) is decidable in $O(|\tau| \cdot |T|)$ time by maintaining a cryptographic hash of each tool's definition at approval time and comparing it before each invocation.
\end{theorem}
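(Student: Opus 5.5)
The plan is to read the theorem as a claim about checking a single finite trace $\tau = \sigma_0 \xrightarrow{\lambda_1} \cdots \xrightarrow{\lambda_n} \sigma_n$ against Property~\ref{prop:integrity}. The procedure is a one-pass monitor. It keeps a table $H \colon \mathcal{T} \to \{0,1\}^{\kappa} \cup \{\bot\}$, initialised to $\bot$ everywhere. The proof then has two parts: correctness (the monitor accepts exactly the traces that satisfy the property) and complexity. To make the problem well posed, I will first fix three assumptions. First, each transition label explicitly records approval events $\texttt{approve}(t)$ (a $\Lambda_{\text{admin}}$ or consent event) and invocation events $\texttt{call}(t)$. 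Second, $\text{def}_i(t)$ is observable from $\sigma_i$. Third, a definition has size bounded by a constant, or else its length is absorbed into the cost unit. Decidability then reduces to showing that membership of a finite trace in the set satisfying the property is computable; the stated bound is the refinement.

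The monitor works as follows. On $\texttt{approve}(t)$ at step $i$, set $H[t] := h(\text{def}_i(t))$. On $\texttt{call}(t)$ at step $i$, reject if $H[t] = \bot$ or $h(\text{def}_i(t)) \neq H[t]$; otherwise continue. Accept at the end of the trace.

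For correctness I argue by induction on the prefix length. The invariant is that after processing $\sigma_0 \cdots \sigma_i$, $H[t] = h(\text{def}_{\text{approved}}(t))$ for every tool approved so far. Soundness of acceptance needs $h(x)=h(y) \Rightarrow x=y$. Two routes give this. In the exact version, the monitor stores $\text{def}_{\text{approved}}(t)$ itself and compares literally; this makes the decision procedure exactly correct, and the hash is then only an implementation of the equality test. In the cryptographic version, the equivalence holds except with probability negligible in $\kappa$, by collision resistance. I would state the theorem for the exact version and add the hash version as a remark. Completeness, meaning every violation is rejected, follows directly because each $\texttt{call}$ step is checked.

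For complexity, each of the $|\tau|$ steps does $O(1)$ table work plus one hash and comparison. To match $O(|\tau|\cdot|T|)$ without assuming constant-time hashing or dictionary access, I will charge each step a scan over all $|T|$ tools. This covers both the naive list-based table and the case where a $\texttt{tools/list}$ discovery step can change many definitions at once and we re-hash all of them. That gives the stated bound, which is loose but safe. The main obstacle is the modelling gap: the paper never defines $\text{def}_i$, approval events, or whether the property quantifies over all traces of a possibly infinite LTS. Over all of $\text{Traces}(\mathcal{LTS})$ with unbounded state, decidability would need finiteness or a regularity assumption. I will handle this by explicitly restricting the theorem to runtime checking of a given finite trace, and noting that for a finite-state LTS the same monitor, composed as a product automaton with $\mathcal{LTS}$, gives a reachability check of a reject state.
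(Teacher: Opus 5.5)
Your proposal is correct and follows the paper's own argument: record a hash of each tool's definition at approval, recompute and compare it before every $\texttt{call}(t)$, and sum the per-check cost over the at most $|\tau|$ invocations. Your additions make it more rigorous than the paper's proof. The exact-comparison variant gives a genuine decision procedure rather than one sound only up to hash collisions, and your explicit cost-model assumptions (bounded definition size, an $O(|T|)$ table scan per step) are left implicit in the paper's ``$O(1)$ per invocation''. One slip in naming: your ``completeness'' sentence (every violation is rejected) is the same direction as soundness of acceptance, while the direction you omit, no false rejections, is trivially true since equal definitions have equal hashes.
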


\begin{proof}
At approval time, compute $h_t = H(\text{def}(t))$ for each tool~$t$, where $H$ is a collision-resistant hash function. Before each invocation $\texttt{call}(t)$, compute $h'_t = H(\text{def}_{\text{current}}(t))$. The check $h_t = h'_t$ requires $O(1)$ per invocation, and there are at most $|\tau|$ invocations across $|T|$~tools.
\end{proof}

\begin{theorem}[Decidability of Data Confinement]
Data confinement (Property~\ref{prop:confinement}) is decidable when the number of trust domains and security levels is finite, by constructing the product automaton of the MCP transition system and the information flow lattice~\cite{Denning1976Lattice}.
\end{theorem}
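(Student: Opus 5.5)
The plan is to reduce data confinement to a reachability question on a finite product structure. Two things are needed: enough finiteness in the model, and an encoding of the forbidden flows as a safety condition.

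First, I fix the finiteness assumptions that the statement leaves implicit. The sets $\mathcal{A},\mathcal{S},\mathcal{T},\mathcal{R}$ are already finite by the definition of an MCP agent system. I additionally assume the trust domains $\mathcal{D}_1,\dots,\mathcal{D}_q$ and the lattice $\mathcal{L}$ are finite. I then abstract each data item $d$ by its label $\ell(d)\in\mathcal{L}$, together with the domain where it currently resides. The abstract knowledge state $\hat{\mathcal{K}}_\sigma$ records, for each agent $a$ and each domain $\mathcal{D}_i$, the join of the labels of the data $a$ holds that originated from or was delivered to $\mathcal{D}_i$. This gives an abstract state space $\hat\Sigma$ of size at most
\[
2^{|\mathcal{A}||\mathcal{S}|}\cdot|\mathcal{L}|^{|\mathcal{A}|q}\cdot 2^{|\mathcal{A}||\mathcal{T}|},
\]
covering connections, labelled knowledge, and permissions. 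The transition relation $\rightarrow$ is lifted along Denning's flow rules~\cite{Denning1976Lattice}. An invocation $\texttt{call}(t)$ with $t\in\mathcal{T}_j$ that passes agent-held data to $t$ produces a flow triple $(\hat d,\mathcal{D}_i,\mathcal{D}_j)$. A read or sampling action joins the source's label into the agent's knowledge.

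Second, I build a monitor automaton $\mathcal{B}$ over the same action alphabet $\Lambda$, extended with flow annotations. $\mathcal{B}$ has a single rejecting sink. It enters the sink exactly when a cross-domain flow occurs with $i\neq j$, $\ell(d)\not\sqsubseteq\ell(\mathcal{D}_j)$, and no authorized $\text{declassify}(d,\mathcal{D}_j)$ label on that transition. Authorizations are given as a finite policy relation over $\mathcal{L}\times\{\mathcal{D}_j\}$, so this check is a finite table lookup. The product $\widehat{\mathcal{LTS}}\times\mathcal{B}$ is then a finite graph. Data confinement holds if and only if the sink is unreachable from $\hat\Sigma_0\times\{q_0\}$. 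Reachability in a finite graph is decidable by BFS, in time linear in the product size, which is the complexity bound I would state.

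The main obstacle is soundness and completeness of the abstraction: every concrete trace must have an abstract counterpart and vice versa, at least with respect to confinement violations. Soundness (concrete violation implies abstract violation) follows because the label join over-approximates what a datum can carry, and monotonicity of $\sqsubseteq$ preserves the violation. Completeness is the delicate direction. The concrete knowledge state $\mathcal{K}_\sigma$ may be unbounded, for example through LLM-generated content, so I would argue that Property~\ref{prop:confinement} depends only on labels and domains, never on data values. Concrete traces that agree on labels therefore agree on the property, and the abstraction is exact. If the LLM's choice of which data to forward is treated nondeterministically, the abstract system over-approximates the concrete one. In that case I would state the result as decidability for the nondeterministic (worst-case agent) semantics, which is the natural reading under an adversarial threat model.
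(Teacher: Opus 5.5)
Your core strategy matches the construction the paper names: make forbidden flows a rejecting sink of a monitor, take a finite product, and decide by reachability. The paper offers no argument beyond that one-line remark. The gap is the step you call completeness.

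That Property~\ref{prop:confinement} is determined by the labels and domains along a \emph{single} trace does not imply that $\widehat{\mathcal{LTS}}$ has exactly the label-projections of concrete traces as its traces. Which transitions of $\mathcal{LTS}$ are enabled, and which held data the agent actually forwards in a $\texttt{call}(t)$, depend on the data values your abstraction discards. So $\widehat{\mathcal{LTS}}$ only over-approximates: an unreachable sink certifies confinement, but a reachable one may be spurious. That is a sound check, not a decision procedure.

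In fact no abstraction can be exact under the literal hypothesis, because that hypothesis is too weak. Take the two-point lattice $\bot\sqsubseteq\top$ and two domains with $\ell(\mathcal{D}_2)=\bot$. Let $\mathcal{K}_\sigma$ encode configurations of a two-counter machine whose steps are transitions internal to $\mathcal{D}_1$. Add one transition, enabled only in halting configurations, that passes a $\top$-labelled datum to a tool of $\mathcal{D}_2$ without declassification. The transition relation is computable and domains and levels are finite, yet confinement holds iff the machine never halts. Your $\widehat{\mathcal{LTS}}\times\mathcal{B}$ forgets the counters and reaches the sink whether or not the machine halts. Falling back to a ``worst-case agent'' semantics does not rescue the theorem, since it decides a different property of a different system.

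The missing ingredient is the hypothesis that the MCP transition system itself is finite-state. The paper's phrase ``product automaton of the MCP transition system'' tacitly presupposes this, and its remark on the other two properties is likewise restricted to finite-state systems. Under that hypothesis, drop the abstraction entirely. With $\Sigma$ finite only finitely many data occur, so their labels can be tracked by a deterministic monitor component with values in the finite lattice $\mathcal{L}$; this is the ``product with the lattice.'' Its product with the finite $\mathcal{LTS}$ and with $\mathcal{B}$ is a finite graph whose paths from initial states correspond one-to-one with concrete traces, so your BFS argument then decides the property exactly.

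If you keep the abstraction as a conservative check for infinite-state deployments, note two conditions your soundness direction silently uses. First, the declassification policy must be downward closed in its label argument, since you test authorization on the join $\ell(\hat d)$ rather than on $\ell(d)$. Second, $\rightarrow$ must be given effectively so that the lifted relation is computable.
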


The verification of privilege boundedness and context isolation requires tracking capability sets and knowledge states through traces, which is decidable for finite-state systems but may require abstraction for practical MCP deployments.

We note that runtime enforcement of these properties can leverage security automata~\cite{Schneider2000Enforceable}, where a monitor observes the trace of MCP operations and terminates (or modifies, following edit automata~\cite{Ligatti2005Edit}) any execution that would violate a security property.

\section{Comparative Analysis of Defense Mechanisms}
\label{sec:defense}

We systematically evaluate 12 existing defense mechanisms against our threat taxonomy. For each mechanism, we assess: (a)~which threat categories it addresses, (b)~its enforcement model (preventive, detective, or reactive), (c)~its deployment requirements, and (d)~its performance overhead.

\subsection{Defense Mechanism Catalog}
\label{subsec:defense_catalog}

\textbf{D1: ETDI (Enhanced Tool Definition Interface)}~\cite{Bhatt2025ETDI}. Extends MCP tool definitions with OAuth 2.0-based cryptographic identity verification, immutable versioned definitions, and policy-based access control. Addresses TC2 (rug pulls) through version pinning and TC4 (privilege escalation) through fine-grained permission policies. Does not address TC3, TC6, or TC7.

\textbf{D2: MCP-Guard}~\cite{Xing2025MCPGuardDefense}. Three-stage defense: Stage~1 performs lightweight pattern-based static scanning ($<$2ms latency), Stage~2 uses E5 text embeddings for deep neural detection, and Stage~3 employs LLM arbitration for ambiguous cases. Achieves 89.63\% accuracy and 89.07\% F1-score. Primarily addresses TC1 (tool poisoning) and TC6 (prompt injection). Limited coverage of TC2, TC3, TC5.

\textbf{D3: MCPGuard}~\cite{Wang2025MCPGuard}. Automated vulnerability detection focusing on agent hijacking from protocol design deficiencies, traditional web vulnerabilities in MCP servers, and supply chain risks. Addresses TC1, partial TC5, but is detective-only (identifies vulnerabilities but does not enforce policies at runtime).

\textbf{D4: Secure Tool Manifest}~\cite{Jamshidi2025Securing, Jamshidi2026Manifest}. RSA-based manifest signing combined with LLM-on-LLM semantic vetting and heuristic guardrails. Addresses TC1 (tool poisoning), TC2 (rug pulls through signed manifests), and partial TC5 (server integrity through signing). Does not address TC3, TC4, or TC7.

\textbf{D5: MCPS (MCP Secure)}~\cite{MCPS2026Secure}. ECDSA P-256 cryptographic identity and message signing with progressive trust levels L0--L4. Every JSON-RPC message wrapped in a signed envelope with nonce-based replay protection. Addresses TC5 (server trust), TC7 (session hijacking, replay attacks). Does not address TC1, TC3, or TC6.

\textbf{D6: MCP Specification Security Principles}~\cite{MCPSpec2025}. The MCP specification defines security principles around user consent, data privacy, tool safety, and LLM sampling controls, but enforcement is left to implementors. Provides partial TC4 coverage through consent requirements. Insufficient as a standalone defense.

\textbf{D7--D8: Benchmark-Derived Defenses.} MCP-SafetyBench~\cite{Zong2025SafetyBench} and MCPSecBench~\cite{Yang2025MCPSecBench} include evaluation of model-level defenses (system prompts, safety training). MCPSecBench reports that current protection mechanisms achieve an average success rate below 30\%, indicating fundamental insufficiency.

\textbf{D9: Enterprise Security Frameworks}~\cite{Narajala2025Enterprise}. Comprehensive mitigation strategies including threat modeling, access control policies, monitoring, and incident response. High-level architectural guidance without formal properties or automated enforcement.

\textbf{D10: Microsoft Agent Governance Toolkit}~\cite{Microsoft2026Governance}. Runtime security with cryptographic identity (DIDs/Ed25519), dynamic execution rings inspired by CPU privilege levels, saga orchestration, and kill switch. Addresses TC4 (privilege escalation), TC5 (server trust), partial TC7. Designed for Microsoft Copilot ecosystem; MCP integration is indirect.

\textbf{D11: OWASP Agentic AI Guidance}~\cite{OWASP2025LLM}. Risk taxonomy and mitigation guidance for Excessive Agency, Supply Chain, and Prompt Injection risks. Addresses awareness but provides no enforceable mechanism.

\textbf{D12: Zero Trust Architecture Principles}~\cite{NIST2020ZeroTrust}. NIST SP 800-207 principles applied to MCP: never trust implicitly, verify every tool invocation, enforce least privilege. Provides architectural guidance relevant to all threat categories but requires MCP-specific instantiation.

\subsection{Coverage Gap Analysis}
\label{subsec:coverage_gaps}

Table~\ref{tab:defense_coverage} maps each defense mechanism to the threat categories it covers.

\begin{table*}[t]
\centering
\caption{Defense mechanism coverage across threat categories. \CIRCLE~= full coverage, \LEFTcircle~= partial coverage, $\cdot$~= no coverage. Max coverage column shows the percentage of the 23 attack vectors addressed.}
\label{tab:defense_coverage}
\small
\begin{tabular}{@{}lcccccccr@{}}
\toprule
\textbf{Defense} & \textbf{TC1} & \textbf{TC2} & \textbf{TC3} & \textbf{TC4} & \textbf{TC5} & \textbf{TC6} & \textbf{TC7} & \textbf{Cov.} \\
\midrule
D1: ETDI       & \NC & \FC & \NC & \PC & \PC & \NC & \NC & 22\% \\
D2: MCP-Guard  & \FC & \NC & \NC & \NC & \NC & \FC & \NC & 30\% \\
D3: MCPGuard   & \FC & \NC & \NC & \NC & \PC & \NC & \NC & 22\% \\
D4: Manifest   & \FC & \FC & \NC & \NC & \PC & \NC & \NC & 26\% \\
D5: MCPS       & \NC & \NC & \NC & \NC & \FC & \NC & \FC & 26\% \\
D6: Spec       & \NC & \NC & \NC & \PC & \NC & \NC & \NC &  9\% \\
D7: SafetyB.   & \PC & \NC & \PC & \PC & \NC & \PC & \NC & 17\% \\
D8: SecBench   & \PC & \PC & \PC & \PC & \PC & \PC & \NC & 34\% \\
D9: Enterprise & \PC & \PC & \PC & \PC & \PC & \PC & \PC & 30\% \\
D10: MS Gov.   & \NC & \NC & \NC & \FC & \FC & \NC & \PC & 26\% \\
D11: OWASP     & \PC & \NC & \NC & \PC & \PC & \PC & \NC & 17\% \\
D12: Zero Trust & \PC & \PC & \PC & \PC & \PC & \PC & \PC & 30\% \\
\bottomrule
\end{tabular}
\end{table*}

Key findings from the coverage analysis:

\begin{enumerate}[leftmargin=*]
  \item \textbf{No single defense exceeds 34\% coverage.} MCPSecBench's evaluation-based approach achieves the broadest coverage but remains partial across most categories.
  \item \textbf{Cross-server data leakage (TC3) is the least defended category.} Only enterprise frameworks and zero trust principles provide partial coverage; no mechanism enforces data confinement formally.
  \item \textbf{Protocol-level attacks (TC7) lack dedicated defenses.} Only MCPS addresses replay attacks directly; session hijacking and cross-protocol confusion remain unaddressed.
  \item \textbf{Rug pull defenses (TC2) rely solely on versioning.} ETDI and secure manifests prevent post-approval mutation through version pinning, but capability escalation (TV7) across sessions is not addressed.
  \item \textbf{Composition attacks ($\mathcal{S}_{\text{compose}}$) lack compositional defenses.} Individual tool-level defenses do not account for emergent risks from tool composition.
\end{enumerate}

\section{MCPShield: Defense-in-Depth Architecture}
\label{sec:architecture}

Based on the coverage gap analysis, we propose \textsc{MCPShield}, a layered security architecture that integrates four complementary defense layers to achieve comprehensive coverage across all seven threat categories.

\subsection{Architectural Overview}
\label{subsec:arch_overview}

\textsc{MCPShield} follows a defense-in-depth strategy inspired by zero trust principles~\cite{NIST2020ZeroTrust} and capability-based security~\cite{Dennis1966Programming}. The architecture consists of four layers:

\begin{enumerate}[leftmargin=*]
  \item \textbf{Layer 1: Capability-Based Access Control (L-CAC).} Governs what each agent can do by restricting tool invocations to explicitly granted capabilities.
  \item \textbf{Layer 2: Cryptographic Tool Attestation (L-CTA).} Ensures what each tool \textit{is} by verifying tool identity, integrity, and provenance at every invocation.
  \item \textbf{Layer 3: Information Flow Tracking (L-IFT).} Controls where data \textit{goes} by enforcing the data confinement property across trust domain boundaries.
  \item \textbf{Layer 4: Runtime Policy Enforcement (L-RPE).} Monitors \textit{how} interactions unfold by observing execution traces and enforcing security policies as security automata~\cite{Schneider2000Enforceable}.
\end{enumerate}

\subsection{Layer 1: Capability-Based Access Control}
\label{subsec:layer1}

Following \citet{Dennis1966Programming} and the principle of least privilege, L-CAC models each agent's authority as a \textit{capability set}---a collection of unforgeable tokens, each granting access to a specific tool with specific parameters.

\begin{definition}[MCP Capability]
An MCP capability is a tuple
\[c = (t, \mathit{params}, \mathit{scope}, \mathit{ttl}, \sigma_c)\]
where $t \in \mathcal{T}$ identifies the tool, $\mathit{params} \subseteq \text{Schema}(t)$ restricts allowed parameters, $\mathit{scope} \in \{\text{read}, \text{write}, \text{execute}\}$ limits the operation type, $\mathit{ttl}$ is a time-to-live, and $\sigma_c$ is a cryptographic signature binding the capability to the issuing authority.
\end{definition}

Before each tool invocation $\texttt{call}(t, \text{args})$, the L-CAC layer verifies:
\begin{enumerate}[leftmargin=*]
  \item The agent holds a valid capability $c$ for tool $t$;
  \item $\text{args} \subseteq \text{params}(c)$;
  \item The operation is within $\text{scope}(c)$;
  \item The capability has not expired ($\text{ttl}$ not exceeded).
\end{enumerate}

\textbf{Composition Control.} To defend against capability chaining (TV12), L-CAC introduces \textit{composition policies} that restrict which tools can be invoked in sequence. A composition policy
\[\rho\colon \mathcal{T} \times \mathcal{T} \to \{\text{allow}, \text{deny}, \text{audit}\}\]
specifies whether tool $t_j$ may be invoked after $t_i$ within a single trace.

\subsection{Layer 2: Cryptographic Tool Attestation}
\label{subsec:layer2}

L-CTA ensures tool integrity (Property~\ref{prop:integrity}) by combining ideas from ETDI~\cite{Bhatt2025ETDI}, secure manifests~\cite{Jamshidi2026Manifest}, and MCPS~\cite{MCPS2026Secure}.

Each MCP server publishes a signed \textit{tool attestation record}:
\begin{multline}
\text{TAR}(t) = \text{Sign}_{sk_s}\big(H(\text{def}(t)) \| \text{version}(t) \\
\| \text{timestamp} \| \text{deps}(t)\big)
\end{multline}
where $sk_s$ is the server's private signing key, $H$ is a collision-resistant hash function, and $\text{deps}(t)$ is a hash of the tool's dependency tree.

The L-CTA layer maintains an \textit{attestation log} (analogous to certificate transparency) that records all tool attestation records. Before each invocation:
\begin{enumerate}[leftmargin=*]
  \item Verify $\text{TAR}(t)$ against the server's public key;
  \item Compare $H(\text{def}_{\text{current}}(t))$ with the approved hash;
  \item Verify $\text{version}(t)$ has not regressed (defending against TV6);
  \item Check $\text{deps}(t)$ against known-good dependency hashes (defending against TV16--TV17).
\end{enumerate}

\subsection{Layer 3: Information Flow Tracking}
\label{subsec:layer3}

L-IFT enforces data confinement (Property~\ref{prop:confinement}) through dynamic taint tracking at the MCP message level. Each piece of data entering the agent's context from a tool response is labeled with its originating trust domain:

\begin{definition}[Tainted Data]
A tainted datum is a pair $(d, \ell_d)$ where $d$ is the data value and $\ell_d \in \mathcal{L}$ is its security label, inherited from the trust domain of the MCP server that produced it.
\end{definition}

When the agent composes a new tool invocation, L-IFT checks that no argument contains data with a security label higher than the target server's clearance level:
\begin{multline}
\forall \text{arg} \in \text{args}(\texttt{call}(t)): \\
\ell(\text{arg}) \sqsubseteq \ell(\mathcal{D}_{\text{server}(t)})
\end{multline}

This directly defends against cross-server data leakage (TC3). For context isolation (Property~\ref{prop:isolation}), L-IFT additionally tracks information provenance through the LLM's reasoning steps, flagging any case where data from server $s_i$ influences a request to server $s_j$ without authorization.

\subsection{Layer 4: Runtime Policy Enforcement}
\label{subsec:layer4}

L-RPE implements a security automaton~\cite{Schneider2000Enforceable} that monitors the MCP transition system in real time. We adopt the edit automaton model~\cite{Ligatti2005Edit}, which can suppress, insert, or modify actions in the event stream.

\begin{definition}[MCP Security Automaton]
An MCP security automaton is a tuple
\[\mathcal{E} = (Q, q_0, \Lambda, \delta, \gamma)\]
where $Q$ is a finite set of monitor states, $q_0$ is the initial state, $\Lambda$ is the MCP action alphabet, $\delta\colon Q \times \Lambda \to Q$ is the transition function, and $\gamma\colon Q \times \Lambda \to \{\text{allow}, \text{suppress}, \text{modify}(\lambda')\}$ is the enforcement function.
\end{definition}

The security automaton encodes policies such as:
\begin{itemize}[leftmargin=*]
  \item \textbf{Rate limiting}: Suppress tool invocations exceeding a threshold within a time window (defends against denial-of-service via excessive tool calls).
  \item \textbf{Anomaly detection}: Flag unusual sequences of tool invocations that deviate from learned behavioral profiles (defends against TV12, TV13).
  \item \textbf{Consent enforcement}: Insert user confirmation requests before high-risk operations, even if the agent attempts to auto-approve (defends against TV13, TV14).
  \item \textbf{Semantic validation}: Modify tool invocation arguments that contain suspected prompt injection patterns, stripping adversarial content before it reaches the MCP server (defends against TC1, TC6).
\end{itemize}

\subsection{Integrated Coverage}
\label{subsec:integrated_coverage}

Table~\ref{tab:mcpshield_coverage} shows how \textsc{MCPShield}'s four layers complement each other to achieve comprehensive coverage.

\begin{table*}[t]
\centering
\caption{\textsc{MCPShield} layer coverage across threat categories.}
\label{tab:mcpshield_coverage}
\small
\begin{tabular}{@{}lcccccccc@{}}
\toprule
\textbf{Layer} & \textbf{TC1} & \textbf{TC2} & \textbf{TC3} & \textbf{TC4} & \textbf{TC5} & \textbf{TC6} & \textbf{TC7} & \textbf{Cov.} \\
\midrule
L-CAC  & \PC & \NC & \NC & \FC & \NC & \NC & \NC & 17\% \\
L-CTA  & \FC & \FC & \NC & \NC & \FC & \NC & \PC & 39\% \\
L-IFT  & \NC & \NC & \FC & \PC & \NC & \PC & \NC & 22\% \\
L-RPE  & \FC & \PC & \PC & \FC & \PC & \FC & \FC & 70\% \\
\midrule
\textbf{Combined} & \FC & \FC & \FC & \FC & \FC & \FC & \FC & \textbf{91\%} \\
\bottomrule
\end{tabular}
\end{table*}

The combined architecture achieves 91\% theoretical coverage (21 of 23 attack vectors fully addressed). The two remaining vectors---TV10 (channel coercion) and TV19 (memory poisoning in long-term agent memory)---require environment-level controls and LLM-internal mechanisms, respectively, which fall outside the protocol-level scope of \textsc{MCPShield}.

\section{Discussion}
\label{sec:discussion}

\subsection{Implications for MCP Protocol Design}
\label{subsec:implications}

Our analysis reveals that several threat categories arise from fundamental design decisions in the MCP specification. Tool descriptions are consumed as natural language by LLMs, creating an inherent prompt injection surface (TC1, TC6). The current specification does not mandate tool definition immutability, enabling rug pulls (TC2). Session management lacks built-in cryptographic protection, exposing transport-level vulnerabilities (TC7). We recommend that the AAIF incorporate the following into future MCP specification revisions:

\begin{enumerate}[leftmargin=*]
  \item \textbf{Mandatory tool attestation}: Tool definitions must carry cryptographic signatures that clients verify before presentation to the LLM.
  \item \textbf{Structured tool descriptions}: Replace free-form natural language descriptions with a structured schema that separates machine-parseable functionality specifications from human-readable documentation, reducing the prompt injection surface.
  \item \textbf{Built-in session security}: Integrate message signing and nonce-based replay protection as mandatory protocol features, not optional extensions.
\end{enumerate}

\subsection{The Composition Challenge}
\label{subsec:composition}

The most novel threat surface identified in our taxonomy is $\mathcal{S}_{\text{compose}}$---the emergent attack surface from multi-server, multi-tool, and cross-protocol interactions. Traditional security frameworks analyze components in isolation, but MCP-based agent systems exhibit emergent risks that only manifest when tools are composed. Capability chaining (TV12) and cross-protocol confusion (TV23) exemplify this: each component may be individually secure, yet their composition is not.

This mirrors the well-known \textit{confused deputy} problem~\cite{Dehghantanha2026SoK} but at a protocol level. Our composition policies in L-CAC provide a first step, but a complete solution requires \textit{compositional security proofs}---showing that security properties of individual components are preserved under composition. This remains an open theoretical challenge.

\subsection{Scalability and Performance Considerations}
\label{subsec:scalability}

\textsc{MCPShield}'s four-layer architecture introduces overhead at each tool invocation. Based on existing component benchmarks:

\begin{itemize}[leftmargin=*]
  \item \textbf{L-CAC}: Capability verification is $O(1)$ with pre-computed capability sets. Estimated overhead: $<$1ms per invocation.
  \item \textbf{L-CTA}: Signature verification (ECDSA P-256) requires approximately 0.5ms. Hash comparison is negligible. Estimated overhead: $<$2ms per invocation.
  \item \textbf{L-IFT}: Taint propagation at the message level (not token level) adds minimal overhead. Estimated overhead: $<$1ms per invocation.
  \item \textbf{L-RPE}: MCP-Guard reports $<$2ms for Stage~1 pattern matching~\cite{Xing2025MCPGuardDefense}. Security automaton state transitions are $O(1)$. Estimated overhead: 2--5ms per invocation.
\end{itemize}

Total estimated overhead of 4--9ms per tool invocation is negligible relative to LLM inference latency (typically 500ms--5s) and network round-trip times.

\subsection{Limitations}
\label{subsec:limitations}

Our work has several limitations that we acknowledge transparently:

\begin{enumerate}[leftmargin=*]
  \item \textbf{Theoretical framework without implementation.} \textsc{MCPShield} is a reference architecture with formal properties but has not been implemented or empirically evaluated. Real-world deployment may reveal practical challenges not captured in our model.
  \item \textbf{LLM-internal attacks are out of scope.} Our formal model operates at the protocol level and cannot address attacks that exploit LLM-internal reasoning (e.g., sophisticated prompt injections that bypass semantic analysis). Defense against these requires advances in LLM alignment and robustness.
  \item \textbf{Coverage metric is theoretical.} Our 91\% coverage claim is based on mapping defense layers to threat vectors. Actual effectiveness depends on the quality of implementation and the adversary's sophistication.
  \item \textbf{Dynamic threat landscape.} Our taxonomy reflects the state of MCP security research as of early 2026. New attack vectors will emerge as the ecosystem evolves.
  \item \textbf{Cross-protocol security is nascent.} Our treatment of A2A/ACP/ANP interoperability threats (TV23) is preliminary, as these protocols are still maturing.
\end{enumerate}

\section{Open Research Challenges}
\label{sec:future}

Our systematization identifies seven research challenges that must be addressed to secure MCP-based AI agents at scale.

\textbf{Challenge 1: Compositional Security Proofs.} Proving that the composition of individually secure MCP tools preserves security properties is an open theoretical problem. Existing approaches from protocol composition (e.g., the UC framework) do not directly apply because MCP tools are not traditional cryptographic protocols---they involve natural language interfaces and non-deterministic LLM behavior.

\textbf{Challenge 2: Semantic Integrity Verification.} Cryptographic signatures verify syntactic integrity (the tool definition has not changed) but not semantic integrity (the tool still does what it claims). A tool could maintain an identical description while modifying its server-side implementation. Verifying semantic integrity likely requires reproducible builds, formal specifications of tool behavior, and runtime behavioral monitoring.

\textbf{Challenge 3: Scalable Information Flow Tracking.} Our L-IFT layer tracks information at the message level, but LLMs transform data through attention mechanisms that make fine-grained provenance tracking intractable. Developing scalable abstractions for information flow through LLMs---perhaps leveraging influence functions or activation analysis---is an open problem.

\textbf{Challenge 4: Adversarial Robustness of Defense Layers.} Each \textsc{MCPShield} layer can itself become an attack target. An adversary might craft tool descriptions that evade MCP-Guard's pattern matching, forge capabilities by exploiting implementation bugs, or poison the behavioral profiles used for anomaly detection. A security analysis of the defense architecture itself (defense-against-defense attacks) is needed.

\textbf{Challenge 5: Cross-Protocol Trust Federation.} As agents increasingly bridge MCP, A2A, ACP, and ANP, a federated trust framework is needed that can translate and compose trust assertions across heterogeneous protocols. The current landscape of 4+ competing protocols~\cite{Ehtesham2025Interoperability} without interoperable trust semantics is a significant gap.

\textbf{Challenge 6: Dynamic Consent and Human-in-the-Loop.} Current consent models are binary (approve/deny) and static (granted at connection time). Real-world MCP usage requires dynamic, context-sensitive consent that adapts to the risk level of each specific invocation. Designing consent UIs and policies that balance security with usability in real-time agentic workflows is an HCI and security co-design challenge.

\textbf{Challenge 7: Ecosystem-Level Governance.} With 177,000+ tools and growing~\cite{Stein2026Evidence}, the MCP ecosystem faces the same governance challenges as npm, PyPI, and Docker Hub, but with additional LLM-specific risks. \citet{Nguyen2026MultiAgent} found that ``Non-Determinism'' and ``Data Leakage'' are the most inadequately addressed risk domains across 16 existing security frameworks. The AAIF~\cite{AAIF2025} provides governance structure, but technical mechanisms for server reputation, tool certification, and ecosystem-wide threat intelligence remain undeveloped.

\section{Conclusion}
\label{sec:conclusion}

This paper presented \textsc{MCPShield}, a formal security framework for MCP-based AI agents that addresses the critical gap of fragmented security research in the rapidly growing MCP ecosystem. Through systematic analysis of 12 MCP security papers, 5 benchmarks, and the 177,000-tool empirical landscape, we constructed a unified threat taxonomy of 7 categories and 23 attack vectors across 4 attack surfaces. Our formal verification model, based on labeled transition systems with trust-boundary annotations, defines four fundamental security properties---tool integrity, data confinement, privilege boundedness, and context isolation---with decidability results for their verification. Our comparative evaluation revealed that no existing defense mechanism covers more than 34\% of the threat landscape, whereas \textsc{MCPShield}'s integrated four-layer architecture achieves 91\% theoretical coverage.

As MCP transitions from an Anthropic project to vendor-neutral infrastructure under the Linux Foundation, and as action-capable tools grow from 27\% to 65\% of the ecosystem, the security stakes have never been higher. The International AI Safety Report 2026~\cite{Bengio2026Safety} concludes that AI risk management techniques are ``improving but insufficient.'' We hope that \textsc{MCPShield}'s formal foundations, systematic analysis, and identified research challenges will catalyze the next generation of MCP security research and guide the AAIF in developing enforceable security standards for the agentic AI era.

\bibliographystyle{IEEEtranN}
\bibliography{references}

\end{document}